\newcommand{\bv}[1]{\boldsymbol{#1}}
\newcommand{\dfn}{\triangleq}
\newcommand{\untsph}{\mathbb{S}^{2}} 
\newcommand{\shc}[3]{({#1})_{#2}^{#3}}
\newcommand{\lsph}{L^2(\untsph)}
\newcommand{\unit}[1]{\bv{\hat{#1}}}
\newcommand{\figref}[1]{Fig.\,\ref{#1}}
\newcommand{\secref}[1]{Section\,\ref{#1}}
\newcommand{\matlab}{\texttt{MATLAB}}
\newcommand{\nspace}{\mathcal{N}}
\DeclarePairedDelimiterX\innerp[2]{\langle}{\rangle}{#1,#2}
\newtheorem{lemma}{Lemma}
\newtheorem{remark}{Remark}
\begin{document}
\title{Accurate Reconstruction of Finite Rate of Innovation Signals on the Sphere}
%
\author{%
Yahya Sattar, Zubair~Khalid,~\IEEEmembership{Member,~IEEE}, and Rodney A. Kennedy,~\IEEEmembership{Fellow,~IEEE}
 \thanks{Y.~Sattar and Z.~Khalid are with the School of Science and Engineering, Lahore University of Management Sciences, Lahore, Pakistan. R.~A.~Kennedy is with
   the Research School of Engineering, College of Engineering and
   Computer Science, The Australian National University, Canberra,
}
\thanks{Z.~Khalid and R.~A.~Kennedy are supported by the Australian Research Council's Discovery Projects funding scheme (Project no. DP150101011)
}

\thanks{E-mail: yahya.sattar@lums.edu.pk, zubair.khalid@lums.edu.pk, rodney.kennedy@anu.edu.au}
}
\maketitle
\begin{abstract}

We develop a method for the accurate reconstruction of non-bandlimited finite rate of innovation signals on the sphere.
For signals consisting of a finite number of Dirac functions on the sphere, we develop an annihilating filter based method for the accurate recovery of parameters of the Dirac functions using a finite number of observations of the bandlimited signal. In comparison to existing techniques, the proposed method enables more accurate reconstruction primarily due to better conditioning of systems involved in the recovery of parameters. For the recovery of $K$ Diracs on the sphere, the proposed method requires samples of the signal bandlimited in the spherical harmonic~(SH) domain at SH degree equal or greater than $ K + \sqrt{K + \frac{1}{4}} - \frac{1}{2}$. In comparison to the existing state-of-the art technique, the required bandlimit, and consequently the number of samples, of the proposed method is the same or less. We also conduct numerical experiments to demonstrate that the proposed technique is more accurate than the existing methods by a factor of $10^{7}$ or more for $2 \le K\le 20$.
\end{abstract}
\begin{IEEEkeywords}
unit sphere, sampling, finite rate of innovation, signal reconstruction, spherical harmonic transform.
\\[1mm]\centering EDICS: DSP-FRI, DSP-SPARSE.
\end{IEEEkeywords}

\section{Introduction}

Development of spherical signal processing techniques finds direct applications in diverse fields of science and engineering where signals are naturally defined on the sphere. These applications include, but not limited to, spherical harmonic lighting in computer graphics~\cite{Zhang:2006}, signal analysis in diffusion magnetic resonance imaging~(dMRI)~\cite{Bates:2016,Samuel:2012}, spectrum estimation in geophysics and cosmology~\cite{Dahlen:2008}, sound analysis and reproduction in acoustics~\cite{Bates:2015,Dokmanic:2016} and placement of antennas in wireless communication~\cite{Alem:2015}. To support signal analysis in these applications, accurate reconstruction of signals from a finite number of measurements is inherently required and is therefore of significant importance. In this work, we consider the problem of sampling and accurate reconstruction of non-bandlimited finite rate of innovation~(FRI) signal, consisting of finite $K$ number of Dirac delta functions, on the sphere.

Many sampling schemes have been devised in the literature~(e.g., see \cite{Khalid:2014} and references therein) for the acquisition of signals bandlimited in the spectral domain, which is enabled by the spherical harmonic~(SH) transform -- a natural counterpart of the Fourier transform for signals on the sphere~\cite{Kennedy-book:2013}. For the accurate computation of SH transform and accurate reconstruction of a signal bandlimited at SH degree $L$~(formally defined in \secref{subsec:MB}), we require $L^2$ number of samples~\cite{Khalid:2014}. The sampling schemes for taking measurements of bandlimited signals, although permit accurate reconstruction of signal, are not suitable for sampling of non-bandlimited signals such as an ensemble of spikes~(Dirac delta functions in the limit) on the sphere which appear in applications in dMRI~\cite{Samuel:2012}, acoustics and cosmology~\cite{Dokmanic:2016}.

Based on the super-resolution theory~\cite{Candes:2014}, an algorithm has been developed in \cite{Bendory:2015} for the reconstruction of FRI signals using semi-definite programming, root finding and least-squares. However, the method is iterative in nature and requires Dirac functions to satisfy a minimum separation condition. Recently, following the annihilating filter method devised for signals in one-dimensional Euclidean domain~\cite{Vetterli:2002} and extended to 2D and higher dimensions~\cite{Shukla:2007}, signal processing techniques have been proposed~\cite{Sameul:2013,Dokmanic:2016} for the recovery of parameters of FRI signal on the sphere. The method proposed in \cite{Sameul:2013} requires bandlimiting the FRI signal at $L=2K$ for the recovery of parameters of Diracs. To reduce the total number of measurements, an alternative reconstruction technique has been developed in \cite{Dokmanic:2016} which requires the measurements of the FRI signal bandlimited at $L\ge(K+\sqrt{K})$ and therefore reduces the number of samples requirement by a factor of approximately four. For both of these schemes, the error in the reconstruction or recovery of parameters increases with the number of Diracs due to ill-conditioning of the systems required to be inverted during the recovery of parameters.

In this work, we also employ the annihilating filter method in order to develop a method for the recovery of parameters, and consequently accurate reconstruction, of an FRI signal composed of $K$ Diracs on the sphere. Our method requires bandlimiting the signal at $L\ge(K+\sqrt{K+ \frac{1}{4}} -\frac{1}{2})$ and therefore takes the same or less\footnote{Since bandlimit $L$ is required to be an integer, the difference between the bandlimit required by the best of existing algorithms and the proposed method is zero or differs by one.} number of samples compared to the existing methods. More importantly, in comparison to existing techniques, our method enables more accurate recovery of parameters of FRI signals as we demonstrate through numerical experiments. We organize the rest of the paper as follows. In next section we present the mathematical background and review the existing methods for the problem under consideration. Proposed method is developed in Section III and its analysis is carried out in Section IV. Finally, we make the concluding remarks Section V.

\section{Preliminaries and Problem Formulation} \label{sec:P&PF}

\subsection{Mathematical Background -- Signals on the Sphere}\label{subsec:MB}

The unit sphere or 2-sphere is defined as $\mathbb{S}^2 = \{\unit{u} \in \mathbb{R}^3 : |\unit{u}|_2 = 1\}$, where $|\cdot|_2$ denotes the Euclidean norm and $\unit{u}$ is the unit vector, parameterized  in terms of $\theta$ and $\phi$ as $\unit{u} \equiv \unit{u}(\theta,\phi) \triangleq (\sin\theta\,\cos\phi, \; \sin\theta\,\sin\phi, \; \cos\theta)^{\prime}$. Here $\theta \in [0,\,\pi]$ is the colatitude angle and $\phi \in [0, \, 2\pi)$ is the longitude angle. The inner product between two functions $f$ and $g$ on the sphere is defined as
\begin{align}\label{eqn:innprd}
\langle f, g \rangle \triangleq  \int_{\mathbb{S}^2} f(\unit{u}) \,\overline {g(\unit{u})} \,ds(\unit{u}),
\end{align}
where $ds(\unit{u}) = \sin\theta\, d\theta\,d\phi$ is the differential area element on $\mathbb{S}^2$, $\overline{(\cdot)}$ denotes the complex conjugate and the integration is carried out over the entire sphere. The complex-valued functions on the 2-sphere form a Hilbert space $\lsph$ equipped with the inner product defined in $\eqref{eqn:innprd}$. The inner product defined in $\eqref{eqn:innprd}$ induces a norm $\|f\| \triangleq {\langle f,f \rangle}^{1/2}$. We refer to the functions with finite induced norm as signals on the sphere.

For the space $\lsph$, spherical harmonic functions~(or spherical harmonics for short) serve as complete orthonormal basis and are defined as~\cite{Kennedy-book:2013}
\begin{equation}
\label{eqn:YLMrep}
Y_{\ell}^{m}(\unit{u})\equiv Y_{\ell}^{m}(\theta,
\phi) \dfn
\sqrt{\frac{2{\ell}+1}{4\pi}\frac{({\ell}-m)!}{({\ell}+m)!}}\,
        P_{\ell}^{m}(\cos\theta)e^{im\phi},
\end{equation}
for integer degree $\ell \geq 0$ and integer order $|m| \leq \ell$. Here, $P_{\ell}^{m}$ is the associated Legendre function of degree $\ell$ and order $m$ and is given by~\cite{Kennedy-book:2013}
\begin{align}
\label{eqn:AssocLegendre}
    P_{\ell}^{m}(\nu)
        &= \frac{(-1)^m}{2^{\ell} {\ell}!} (1-\nu^2)^{m/2} \frac{d^{{\ell}+m}}{d\nu^{{\ell}+m}}
            (\nu^2-1)^{\ell} \\
    P_{\ell}^{-m}(\nu)
        &= {(-1)^m} \frac{({\ell}-m)!}{({\ell}+m)!} P_{\ell}^m(\nu),
\end{align}
for $0 \leq m \leq \ell$ and $|\nu|\leq 1$. Due to completeness of spherical harmonics, we can represent any signal $f\in\lsph$ as
\begin{align}\label{eqn:HarmonExpan}
f(\unit{u}) = \sum\limits_{\ell = 0}^{\infty}\sum\limits_{m = -\ell}^{\ell} \shc{f}{\ell}{m} \, Y_{\ell}^m(\unit{u}),
\end{align}
where $\shc{f}{\ell}{m} \dfn \langle f, Y_{\ell}^m \rangle$~\cite{Kennedy-book:2013} denotes the SH coefficient of integer degree $\ell \geq 0$ and integer order $|m| \leq \ell$. The spherical harmonic coefficients form the representation of a signal in spectral~(Fourier) domain. We define the function $f$ to be bandlimited in spectral domain at degree $L$ if $\shc{f}{\ell}{m}=0,\,\forall\, \ell \ge L,\, -\ell \leq m \leq \ell$.
%
\subsection{Problem under Consideration}
We consider a signal consisting of $K$ Diracs on the sphere given by
\begin{align}\label{eqn:diracs}
f(\unit{u}) = \sum\limits_{k =1}^K \alpha_k \,\delta(\unit{u},\unit{u}_k),
\end{align}
where $\alpha_k$ is the complex amplitude and $\unit{u}_k \equiv \unit{u}_k(\theta_k,\phi_k)$ represents the location of $k$-th Dirac on the sphere. Here $\delta(\unit{u},\unit{u}_k)$ is the Dirac delta function defined on the sphere which, similar to its linear counterpart, is identified by its sifting property $\langle f, \delta(\cdot\, , \unit{u}_k)\rangle = f(\unit{u}_k)$. The problem under consideration is to accurately recover the amplitudes $\alpha_k$ and locations $\unit{u}_k$ of $K$ Diracs of the signal $f$, given the samples of $f$ bandlimited in the spectral domain.

\subsection{Review of Existing Methods}

We here review the existing methods presented in literature~\cite{Sameul:2013,Dokmanic:2016} for the recovery of parameters of signal of the form given in \eqref{eqn:diracs}. Recently, an algorithm has been presented in \cite{Sameul:2013}, based on the annihilation filter method~\cite{Vetterli:2002}, for the recovery of the parameters $f$ which requires the computation of spherical harmonic coefficients $\shc{f}{\ell}{m}$ of $f$ for degrees $\ell<2K$ and orders $|m|\le\ell$, which are computed by first convolving the signal $f$ with a sampling kernel which bandlimits the signal at degree $L = 2K$. If the recently proposed optimal-dimensionality sampling~\cite{Khalid:2014} is employed for the computation of SH coefficients, the method requires $L^2$ samples of the signal $f$ bandlimited at $L=2K$. Employing the SH coefficients, the method then forms a Toeplitz system which enables the computation of $\phi_k$ using which $\alpha_k$ and $\theta_k$ are recovered. The method assumes that $\theta_k \notin \{0, \pi\}$ and $\theta_j \neq \pi - \theta_k$ when $\phi_j = \phi_k$ for $j,k = 1,2,\hdots,K$ and $j\ne k$.

To reduce the number of samples required for the recovery of parameters, an algorithm has been presented more recently in \cite{Dokmanic:2016}, which requires SH coefficients $\shc{f}{\ell}{m}$ for $\ell \le L$ and orders $|m|\le\ell$ of the signal $f$ bandlimited at SH degree\footnote{Here $\lceil \cdot\rceil$ denotes the integer ceiling function.} $L = \lceil K + \sqrt{K} \rceil$.
Consequently, when compared to the method in \cite{Sameul:2013}, this method requires~(approximately) \emph{four} times less number of samples. The SH coefficients are then used to form an annihilating matrix~\cite{Vetterli:2002} which enables the computation of $\theta_k$, which are then used to recover the parameters $\alpha_k$ and $\phi_k$. The algorithm works only when $\theta_k \notin \{0, \pi\}$ and $\theta_j \neq \theta_k$ for $j \neq k$ and $j,k = 1,2,\hdots,K$. Although both of these methods allow recovery of parameters, the reconstruction error increases with the increase in number of Diracs on the sphere as we illustrate later in the paper.

\section{Accurate Reconstruction of Signals with FRI}\label{sec:ARSF}
Here, we propose a method for the recovery of the parameters of $f$ given in \eqref{eqn:diracs}. In comparison to the existing methods~\cite{Sameul:2013,Dokmanic:2016}, the proposed algorithm has significantly smaller reconstruction error. Furthermore, our method requires (approximately)~four times less and the same~(or less) number of samples than those required by the methods presented in \cite{Sameul:2013} and \cite{Dokmanic:2016}, respectively.

\subsection{Formulation}

By employing the sifting property of Dirac delta function, we can express the SH coefficient $\shc{f}{\ell}{m} = \langle f, Y_{\ell}^m \rangle$ of $f$ given in \eqref{eqn:diracs} as
\begin{align}\label{eqn:flmBysift}
\shc{f}{\ell}{m} = \sum\limits_{k =1}^K \alpha_k \, \overline{Y_{\ell}^m(\theta_k, \phi_k)}.
\end{align}
which, noting that $Y_{\ell}^m(\theta, \phi) = Y_{\ell}^m(\theta, 0)\, e^{im\phi}$~(using \eqref{eqn:YLMrep}) and $Y_\ell^m(\theta,0)$ is a product of $(\sin\theta)^{|m|}$ and a polynomial in $\cos\theta$ of degree $(\ell-|m|)$~(using \eqref{eqn:AssocLegendre}), can be expressed as
\begin{align}\label{eqn:flmPolynomial}
\shc{f}{\ell}{m} = \sum\limits_{k =1}^K \alpha_k \sum\limits_{p=0}^{\ell-|m|} c_{\ell m}^p\,(\cos\theta_k)^p (\sin\theta_k)^{|m|} e^{-im\phi_k},
\end{align}
where $c_{\ell m}^p$ denotes the coefficient associated with $(\cos\theta)^p$ of the polynomial defining $Y_\ell^m(\theta,0)$. 
%
For order $m$, we rearrange $\eqref{eqn:flmPolynomial}$ to get,
\begin{align}\label{eqn:FLM_2}
\shc{f}{\ell}{m} = \sum_{p=0}^{\ell-|m|} {c}_{\ell m}^{p} d_{pm},
\end{align}
where
\begin{align}\label{eqn:dpm}
d_{pm} = \begin{cases}
\sum\limits_{k = 1}^K (\alpha_k y_{kp})x_k^m &\quad 0\le m <L ,\\
\sum\limits_{k = 1}^K   (\alpha_k y_{kp})    \overline{x_k^m} &\quad -L< m <0 ,
\end{cases}
\end{align}
with $x_k = \sin\theta_k e^{-i\phi_k}$ and $y_{kp} = (\cos\theta_k)^p$. Clearly, both $d_{pm}$ for $0\le m <L$ and $\overline{d_{pm}}$ for $-L< m <0$ are linear combination of exponentials $x_k^m$ and therefore are of special interest as the annihilating filter technique~\cite{Vetterli:2002} can be used to recover $x_k$.
\subsection{Recovery of Longitudes of Diracs}

We consider that the measurements of the signal $f$ ban-limited at degree $L$ are available such that the spherical harmonic coefficients $\shc{f}{\ell}{m}$ can be accurately computed for all degrees $\ell <L$ and all orders $|m|\le\ell$.
We shortly present the bandlimit $L$ required for the accurate recovery of parameters.

In \eqref{eqn:FLM_2}, $\shc{f}{\ell}{m}$ for $|m|\le\ell<L$ and $d_{pm}$ for $0 \le p <L-|m|$ form a linear system of equations for each $ |m|<L$ with triangular coefficient matrix of size $(L-|m|)\times(L-|m|)$. Consequently, $d_{pm}$ for $0 \le p <L-|m|$ can be recovered {\em exactly} for each $ |m| <L$ using \eqref{eqn:FLM_2}.

Once $d_{pm}$ is computed, we employ the annihilating filter technique~\cite{Vetterli:2002} to estimate $x_k$ as $d_{pm}$ is a linear combination of $K$ powers of $x_k$. This estimation involves the construction of annihilating matrix $\mathbf{Z}$ given by
\begin{align}\label{eqn:Zmat1}
\mathbf{Z} =
\begin{bmatrix}
d_{0,L-1} & d_{0,L-2} & \cdots & d_{0,L-K-1}\\
d_{0,L-2} & d_{0,L-3} & \cdots & d_{0,L-K-2}\\
\vdots & \vdots & \ddots & \vdots\\
d_{0,K} & d_{0,K-1} & \cdots & d_{0,0}\\
\overline{d_{0,-(L-1)}} & \overline{d_{0,-(L-2)}} & \cdots & \overline{d_{0,-(L-K-1)}}\\
\overline{d_{0,-(L-2)}} & \overline{d_{0,-(L-3)}} & \cdots & \overline{d_{0,-(L-K-2)}}\\
\vdots & \vdots & \ddots & \vdots\\
\overline{d_{0,-(K)}} & \overline{d_{0,-(K-1)}} & \cdots & \overline{d_{0,0}}\\
d_{1,L-2} & d_{1,L-3} & \cdots & d_{1,L-K-2}\\
d_{1,L-3} & d_{1,L-4} & \cdots & d_{1,L-K-3}\\
\vdots & \vdots & \ddots & \vdots\\
d_{1,K} & d_{1,K-1} & \cdots & d_{1,0}\\
\overline{d_{1,-(L-2)}} & \overline{d_{1,-(L-3)}} & \cdots & \overline{d_{1,-(L-K-2)}}\\
\overline{d_{1,-(L-3)}} & \overline{d_{1,-(L-4)}} & \cdots & \overline{d_{1,-(L-K-3)}}\\
\vdots & \vdots & \ddots & \vdots
\end{bmatrix},
\end{align}
followed by the computation of its right singular vector $\mathbf{v}$.
\begin{lemma}
If the longitudes and colatitudes of $K$ Diracs of the signal $f$ are such that $\theta_j \neq \pi - \theta_k$ when $\phi_j = \phi_k$ for $j,k = 1,2,\hdots,K$, $j\ne k$, the null-space, denoted by $\nspace(\mathbf{Z})$, of the annihilating matrix $\mathbf{Z}$ with at least $K$ rows is $1$-dimensional.
\end{lemma}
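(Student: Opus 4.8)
The plan is to prove $\dim\nspace(\mathbf{Z})=1$ by showing that the $(K+1)$-column matrix $\mathbf{Z}$ has rank exactly $K$. I would first exhibit a single null vector and then argue that it is the only one up to scale. To get the null vector, note from \eqref{eqn:dpm} that \emph{every} entry of $\mathbf{Z}$ is a linear combination of the nonnegative powers $x_1^{|m|},\dots,x_K^{|m|}$: the unconjugated entries $d_{pm}$ with $m\ge 0$ directly, and the conjugated entries $\overline{d_{pm}}$ with $m<0$ because conjugating \eqref{eqn:dpm} turns $\overline{x_k^{m}}$ into $x_k^{|m|}$ and $\alpha_k y_{kp}$ into $\overline{\alpha_k y_{kp}}$. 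Writing a trial vector $\mathbf{v}=(v_0,\dots,v_K)^{\prime}$ as the coefficient list of $V(z)=\sum_{j=0}^{K}v_j z^{K-j}$, a direct computation shows that pairing $\mathbf{v}$ with any row of $\mathbf{Z}$ produces $\sum_{k=1}^{K}\gamma_k\,V(x_k)$ for suitable weights $\gamma_k$ (namely $\gamma_k=\alpha_k y_{kp}x_k^{M-K}$, or its conjugate, for the row with leading index $M$ in block $p$). Hence the coefficient vector of $\prod_{k=1}^{K}(z-x_k)$ annihilates $\mathbf{Z}$, so $\dim\nspace(\mathbf{Z})\ge 1$.

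The same computation yields the factorization $\mathbf{Z}=\mathbf{C}\mathbf{R}$, where $\mathbf{R}$ is the $K\times(K+1)$ Vandermonde matrix with rows $\mathbf{r}_k=(x_k^{K},x_k^{K-1},\dots,1)$ and $\mathbf{C}$ collects the weights $\gamma_k$, one row per row of $\mathbf{Z}$. I would next use the hypothesis to show the nodes $x_k=\sin\theta_k\,e^{-i\phi_k}$ are distinct: since the standing assumption $\theta_k\notin\{0,\pi\}$ forces $\sin\theta_k\neq 0$, an equality $x_j=x_k$ would require $\sin\theta_j=\sin\theta_k$ and $e^{-i\phi_j}=e^{-i\phi_k}$, i.e.\ $\phi_j=\phi_k$ together with $\theta_j\in\{\theta_k,\pi-\theta_k\}$; distinctness of the Diracs rules out $\theta_j=\theta_k$, and the hypothesis $\theta_j\neq\pi-\theta_k$ when $\phi_j=\phi_k$ rules out the remaining case. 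With distinct nodes $\mathbf{R}$ has full row rank $K$, hence admits a right inverse, and $\mathrm{rank}(\mathbf{Z})=\mathrm{rank}(\mathbf{C})\le K$.

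It then remains to prove that $\mathbf{C}$ has full column rank $K$, which together with the null vector above forces $\mathrm{rank}(\mathbf{Z})=K$ and thus $\dim\nspace(\mathbf{Z})=1$; this is the crux. Equivalently, $\mathbf{C}\mathbf{w}=\mathbf{0}$ must imply $\mathbf{w}=\mathbf{0}$, i.e.\ the evaluation vectors $\big(\alpha_k(\cos\theta_k)^{p}x_k^{n}\big)_k$ and $\big(\overline{\alpha_k}(\cos\theta_k)^{p}x_k^{n}\big)_k$, ranging over the index pairs $(p,n)$ supplied by the blocks of $\mathbf{Z}$, must span $\mathbb{C}^{K}$. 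Since $\alpha_k\neq 0$, this reduces to unisolvence of the $K$ distinct points $(\cos\theta_k,x_k)$ against the available two-variable monomials $(\cos\theta)^{p}z^{n}$: the positive blocks contribute all $(p,n)$ in the triangle $p+n\le L-1-K$, and the conjugate blocks contribute the same monomials under the weighting $\overline{\alpha_k}$, so the number of available test vectors is $(L-K)(L-K+1)$. The condition $(L-K)(L-K+1)\ge K$ is exactly equivalent to the stated bandlimit $L\ge K+\sqrt{K+\tfrac14}-\tfrac12$, which is why the hypothesis that $\mathbf{Z}$ have at least $K$ rows appears.

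I expect the main obstacle to be precisely this spanning/full-rank step. Merely counting that at least $K$ rows are present is necessary but not sufficient; one must show the relevant generalized Vandermonde determinant, built from the distinct nodes $x_k$, the polynomial factors $(\cos\theta_k)^{p}$, and the two weightings $\alpha_k$ and $\overline{\alpha_k}$, is nonzero. The node distinctness established above and the triangular structure of the available $(p,n)$ are the ingredients I would marshal to carry out this determinant argument and thereby close the proof.
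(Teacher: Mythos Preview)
Your decomposition $\mathbf{Z}=\mathbf{C}\mathbf{R}$ and the argument that the hypothesis forces the nodes $x_k=\sin\theta_k e^{-i\phi_k}$ to be pairwise distinct are exactly right, and they go considerably further than the paper's own proof, which consists of two sentences: the hypothesis makes the $x_k$ distinct, and ``from \eqref{eqn:dpm}, the rank of $\mathbf{Z}$ with at least $K$ rows is equal to $K$ when all $x_k$ are unique.'' The paper supplies no further justification for the rank claim, so what you single out as ``the crux'' is precisely the step the paper takes for granted.

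Your caution there is well founded, because the full column rank of $\mathbf{C}$ does \emph{not} follow from the stated hypotheses alone. Take $K=2$, $L=3$ (the minimum bandlimit in Remark~2), distinct $x_1,x_2$, and \emph{real} amplitudes $\alpha_1,\alpha_2$. Then only the $p=0$ positive and negative blocks contribute, one row each, and since $\overline{d_{0,-m}}=\overline{\alpha_1}\,x_1^{m}+\overline{\alpha_2}\,x_2^{m}=\alpha_1 x_1^{m}+\alpha_2 x_2^{m}=d_{0,m}$, the two rows of $\mathbf{Z}$ coincide; hence $\mathrm{rank}(\mathbf{Z})=1$ and $\dim\nspace(\mathbf{Z})=2$. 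In your factorization this is exactly the singularity of $\mathbf{C}=\left(\begin{smallmatrix}\alpha_1&\alpha_2\\ \overline{\alpha_1}&\overline{\alpha_2}\end{smallmatrix}\right)$ when $\alpha_1\overline{\alpha_2}\in\mathbb{R}$. So the obstacle you anticipated is not a technicality to be dispatched by a determinant identity: the lemma as stated requires either an additional genericity assumption on the amplitudes or a bandlimit strictly above the minimum for the rank-$K$ conclusion to hold in all cases. Your sketch is therefore as complete as the hypotheses allow; the remaining step cannot be closed without strengthening them, and the paper's proof simply does not confront this issue.
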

\begin{proof}
If $\theta_j \neq \pi - \theta_k$ when $\phi_j = \phi_k$ for $j,k = 1,2,\hdots,K$, $j\ne k$, each $x_k = \sin\theta_k e^{-i \phi_k}$ corresponding to the $k$-th Dirac is unique. From $\eqref{eqn:dpm}$, we note that the rank of $\mathbf{Z}$ with at least $K$ rows is equal to $K$ when all $x_k$ are unique. Consequently, $\nspace(\mathbf{Z})$ is $1$-dimensional.
\end{proof}
\begin{remark}[On the Recovery of $x_k$]
Now we employ the annihilating filter property to estimate $x_k$ using $\mathbf{v}\in\nspace(\mathbf{Z})$. A finite impulse response~(FIR) filter is known as annihilating filter if zeros of the filter are placed such that the filter annihilates the signal. Since $\mathbf{v} \in \nspace(\mathbf{Z})$, we have
\begin{align} \label{eqn:rowOfZv}
\mathbf{d}_q^{\mathsf{T}}\mathbf{v} = 0,
\end{align}
for any $\mathbf{d}_q \dfn \mathbf{Z}\{q, :\}$, that is, the $q$-th row of the annihilating matrix $\mathbf{Z}$. Consequently, $\mathbf{v}$ is a vector of coefficients of the FIR filter which annihilates the signal of the form $\eqref{eqn:dpm}$. The transfer function of such annihilating filter is given by
\begin{align}\label{eqn:Vofz}
V(z) \dfn \prod\limits_{k = 1}^K (1 - x_k z^{-1}) \dfn \sum\limits_{n= 0}^K v_n z^{-1}.
\end{align}
Since we have determined $\mathbf{v}$, we obtain the estimate of $x_k,\,k=1,2,\hdots,K$ by finding the roots, denoted by $\tilde{x}_k,\,k = 1,2,\hdots,K$, of the annihilating filter.
\end{remark}

Using $\tilde{x}_k$ and noting that $x_k = \sin\theta_k e^{-i \phi_k}$, we recover the longitudes $\phi_k$ from $\tilde{x}_k$ as
\begin{align}\label{eqn:recOfphi}
\tilde{\phi}_k = -{\rm Phase}\{\tilde{x}_k\},
\end{align}
where ${\rm Phase}\{\cdot\}$ returns the phase of the complex argument.

\begin{remark}[On the Bandlimit Requirement]
For a signal $f$ bandlimited at $L$, the maximum number of rows of $\mathbf{Z}$ which can be constructed is $2\times(L-K)+2\times(L-K-1)+\cdots+2\times2+2\times1$. Following Lemma 1, we require matrix $\mathbf{Z}$ to have at least $K$ rows to ensure a unique vector $\mathbf{v} \in \nspace(\mathbf{Z})$. Consequently, we require
\begin{align}\label{eqn:boundOnL}
\begin{split}
L & \geq K + \sqrt{K + \frac{1}{4}} - \frac{1}{2}
\end{split}
\end{align}
\end{remark}

\subsection{Recovery of Colatitudes and Amplitudes of Diracs}

Here we use the estimated $\tilde x_k$ to recover colatitude $\theta_k$ and amplitude $\alpha_k$  for $k = 1,\hdots,K$.
For $p=0$, we have $d_{pm}$ for $m = 0,1,\hdots,L-1$, which we explicitly rewrite, using \eqref{eqn:dpm}, as
\begin{align}{\label{eqn:d0m}}
d_{0m} = \sum\limits_{k = 1}^K \alpha_k x_k^m.
\end{align}
Since we have chosen $L > K$~(Remark 2), we can form the following Vandermonde system using $\eqref{eqn:dpm}$:
\begin{align}\label{eqn:VandforAlpha}
\begin{bmatrix}
1 &1 &\cdots &1\\
x_1 &x_2 &\cdots &x_k\\
\vdots &\vdots &\ddots &\vdots\\
x_1^{K-1} &x_2^{K-1} &\cdots &x_k^{K-1}
\end{bmatrix}
\begin{bmatrix}
\alpha_1\\
\alpha_2\\
\vdots\\
\alpha_K
\end{bmatrix}
=
\begin{bmatrix}
d_{00}\\
d_{01}\\
\vdots\\
d_{0K-1}
\end{bmatrix}.
\end{align}
Provided $x_k, \; k = 1,2,\hdots,K$, are distinct as ensured by Lemma 1, the Vandermonde system above enables recovery of amplitudes $\tilde{\alpha}_k, \; k = 1,2,\hdots,K$.

To recover colatitude parameter, we use $d_{pm}$ for $p=1$ and $m = 0,1,\hdots,L-2$. Using $d_{1m}$, given in $\eqref{eqn:dpm}$, and noting that $y_{k1} = \cos\theta_k$, we formulate another Vandermonde system given by
\begin{align}\label{eqn:VandforTheta}
\begin{bmatrix}
1 &1 &\cdots &1\\
x_1 &x_2 &\cdots &x_k\\
\vdots &\vdots &\ddots &\vdots\\
x_1^{K-1} &x_2^{K-1} &\cdots &x_k^{K-1}
\end{bmatrix}
\begin{bmatrix}
\alpha_1 \cos\theta_1\\
\alpha_2 \cos\theta_2\\
\vdots\\
\alpha_K \cos\theta_K
\end{bmatrix}
=
\begin{bmatrix}
d_{10}\\
d_{11}\\
\vdots\\
d_{1K-1}
\end{bmatrix}.
\end{align}
The solution of above system yields an estimate of $\alpha_k \cos\theta_k$, denoted by ${\mathbb{E}}\{\alpha_k \cos\theta_k\}$, for $k = 1,2,\hdots,K$. Since we have already recovered amplitude as $\tilde{\alpha}_k$, we recover the colatitude as
\begin{align}\label{eqn:recOfTheta}
\tilde{\theta_k} = \arccos \bigg[\frac{ \mathbb{E}\{\alpha_k \cos\theta_k \}  }{\tilde{\alpha}_k} \bigg], \; k = 1,2,\hdots,K.
\end{align}

\section{Analysis}\label{sec:illust}
Here we compare the bandlimit requirement and reconstruction accuracy of the proposed method with the existing methods~\cite{Sameul:2013,Dokmanic:2016}. We corroborate, through numerical experiments, the claim that the proposed method is superior in terms of accuracy of recovered parameters in comparison with the existing methods.

\subsection{Bandlimit Requirement}

As mentioned earlier, we require $L^2$ number of measurements of the signal bandlimited at $L$ to compute its spherical harmonic coefficients. Consequently, it is desirable for a method to have smaller bandlimit requirements to reduce the total number of measurements. For the recovery of parameters of the signal consisting of $K$ Diracs on the sphere, the bandlimit required by the proposed algorithm is $L  =\lceil K + \sqrt{K + \frac{1}{4}} - \frac{1}{2}\rceil$, which is much smaller as compared to $L=2K$ required by the method in \cite{Sameul:2013} and the same~(or \emph{less}) than  $L = \lceil K + \sqrt{K} \rceil$ required for the algorithm presented in \cite{Dokmanic:2016}.

\subsection{Accuracy Analysis}
In order to compare the recovery/reconstruction error of the proposed method with the algorithms presented in \cite{Sameul:2013} and \cite{Dokmanic:2016}, we implement each method in \matlab\footnote{For our method, we make the code publicly available at \url{http://zubairkhalid.org/fri} to facilitate the reproduction of research results. }\, and recover the parameters of the signal $f$, of the form given in \eqref{eqn:diracs}, by conducting following experiment. For each $K = 2,4,\hdots,20$, we randomly choose the parameters\footnote{The parameters are randomly generated such that we have $K$ distinct $\theta_k$ and $x_k = \sin\theta_k e^{-i\phi_k}$ as the method in \cite{Dokmanic:2016} requires $\theta_k,\,k=1,2,\hdots,K$ to be unique, whereas the method in \cite{Sameul:2013} and our proposed method require $x_k,\,k=1,2,\hdots,K$ to be unique~(Lemma 1). This is avoided by imposing the condition that the $K$ Diracs have at least $\pi/3K$ distance among them.} $\theta_k\in(0,\pi)$, $\phi_k \in [0,2\pi)$ and $\alpha_k$ with real and imaginary parts uniformly distributed in $[-1,1]$ for $k=1,2,\hdots,K$. For each method and each $K$, we recover the parameters $\tilde{\theta}_k$, $\tilde{\phi}_k$ and $\tilde{\alpha}_k$ of the signal and compute the mean-squared errors given by
$
E_{\theta} = \frac{1}{K} \sum\limits_{k=1}^{K} |\tilde{\theta}_k-\theta_k|^2$,  $E_{\phi} = \frac{1}{K} \sum\limits_{k=1}^{K} |\tilde{\phi}_k-\phi_k|^2$, $E_{\alpha} = \frac{1}{K} \sum\limits_{k=1}^{K} |\tilde{\alpha}_k-\alpha_k|^2$.
We plot these errors, averaged over $1000$ trials of the experiment, in \figref{fig:errors}, where it is evident that the proposed method enables more accurate recovery of parameters when compared to other methods in literature. On average, the proposed algorithm outperforms the other methods in terms of smaller recovery error by a factor up to $10^{7}$.

\begin{figure}[!t]
    \centering
    \includegraphics[scale=0.65,trim={0 0  0 0 }]{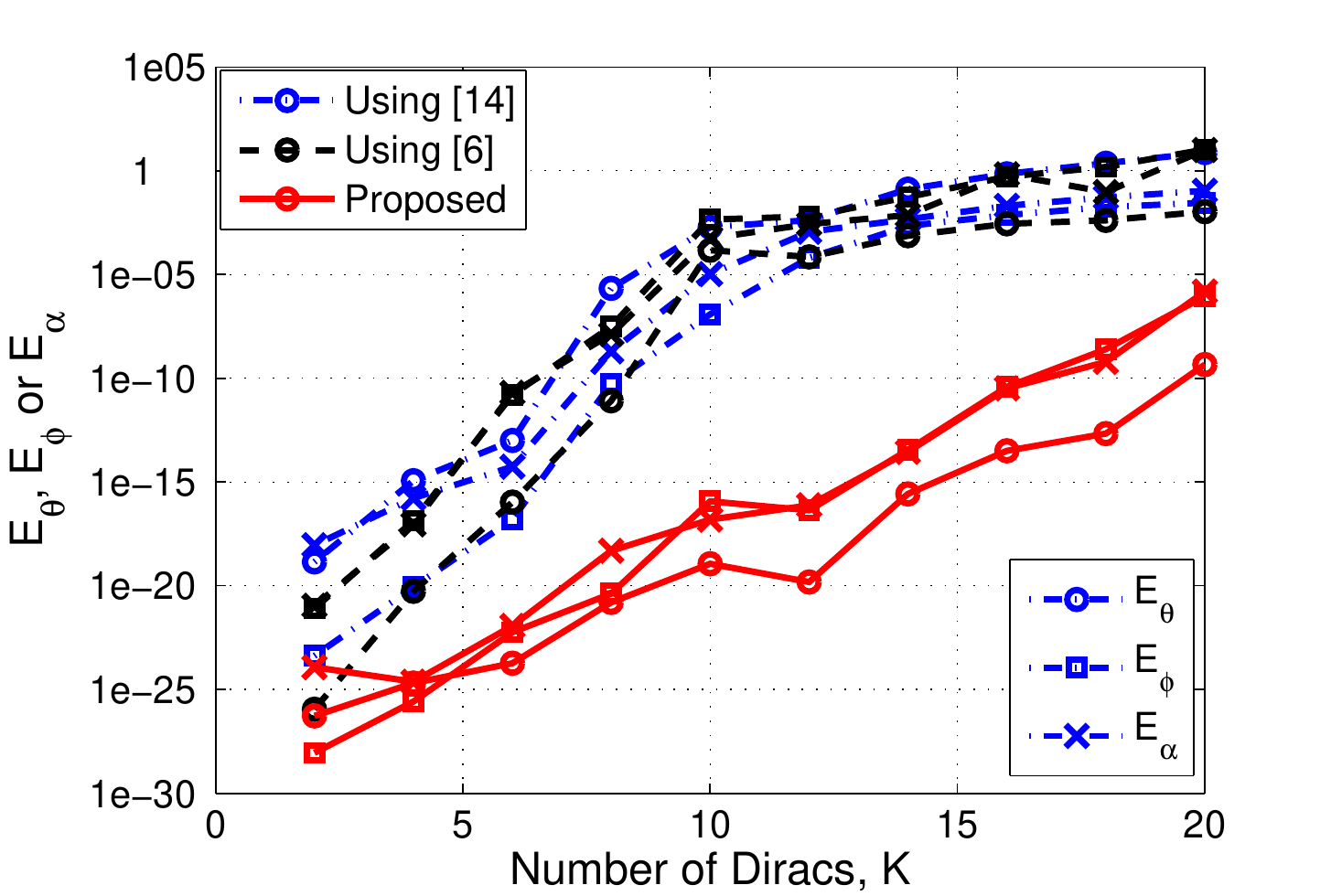}
    \caption{Mean Error $E_\theta$, $E_\phi$ and $E_\alpha$ between recovered and original colatitudes, longitudes and amplitudes respectively for different values of $2 \le K \le 20$~(number of Diracs) using the existing techniques~\cite{Sameul:2013,Dokmanic:2016} and the proposed algorithm. }
    \label{fig:errors}
\end{figure}

\section{Conclusions}\label{sec:conclusion}

In this work, we have proposed a method for accurate reconstruction of an FRI signal consisting of $K$ Dirac functions on the sphere. The proposed method takes samples of the signal bandlimited in the SH domain at the SH degree $L=\lceil K+\sqrt{K+\frac{1}{4}} - \frac{1}{2}\rceil$ for the computation of SH transform of the signal. Following the computation of SH coefficients, we recover the parameters of the Diracs using the annihilating filter method, root finding and solving a series of linear systems. The proposed method requires the same or less number of samples compared to the best of existing methods.
More importantly, the error in the recovery of parameters is significantly smaller. 

%
%
%




%

\bibliography{sht_bib} 

\begin{thebibliography}{10}
\providecommand{\url}[1]{#1}
\csname url@samestyle\endcsname
\providecommand{\newblock}{\relax}
\providecommand{\bibinfo}[2]{#2}
\providecommand{\BIBentrySTDinterwordspacing}{\spaceskip=0pt\relax}
\providecommand{\BIBentryALTinterwordstretchfactor}{4}
\providecommand{\BIBentryALTinterwordspacing}{\spaceskip=\fontdimen2\font plus
\BIBentryALTinterwordstretchfactor\fontdimen3\font minus
  \fontdimen4\font\relax}
\providecommand{\BIBforeignlanguage}[2]{{%
\expandafter\ifx\csname l@#1\endcsname\relax
\typeout{** WARNING: IEEEtran.bst: No hyphenation pattern has been}%
\typeout{** loaded for the language `#1'. Using the pattern for}%
\typeout{** the default language instead.}%
\else
\language=\csname l@#1\endcsname
\fi
#2}}
\providecommand{\BIBdecl}{\relax}
\BIBdecl

\bibitem{Zhang:2006}
L.~Zhang and D.~Samaras, ``Face recognition from a single training image under
  arbitrary unknown lighting using spherical harmonics,'' \emph{{IEEE} Trans.
  Pattern Anal. Mach. Intell.}, vol.~28, no.~3, pp. 351--363, Mar. 2006.

\bibitem{Bates:2016}
A.~P. Bates, Z.~Khalid, and R.~A. Kennedy, ``An optimal dimensionality sampling
  scheme on the sphere with accurate and efficient spherical harmonic transform
  for diffusion {MRI},'' \emph{{IEEE} Signal Process. Lett.}, vol.~23, pp.
  15--19, January 2016.

\bibitem{Samuel:2012}
S.~Deslauriers-Gauthier and P.~Marziliano, ``Spherical finite rate of
  innovation theory for the recovery of fiber orientations,'' in \emph{Proc.
  IEEE Eng. Med. Biol. Soc.}, Aug. 2012, pp. 2294--2297.

\bibitem{Dahlen:2008}
F.~A. Dahlen and F.~J. Simons, ``Spectral estimation on a sphere in geophysics
  and cosmology,'' \emph{Geophys. J. Int.}, vol. 174, pp. 774--807, 2008.

\bibitem{Bates:2015}
A.~P. Bates, Z.~Khalid, and R.~A. Kennedy, ``Novel sampling scheme on the
  sphere for head-related transfer function measurements,'' \emph{{IEEE/ACM}
  Trans. Audio Speech Language Process.}, vol.~23, pp. 1068--1081, June 2015.

\bibitem{Dokmanic:2016}
I.~Dokmanić and Y.~M. Lu, ``Sampling sparse signals on the sphere: Algorithms
  and applications,'' \emph{{IEEE} Trans. Signal Process.}, vol.~64, no.~1, pp.
  189--202, Jan 2016.

\bibitem{Alem:2015}
Y.~F. Alem, Z.~Khalid, and R.~A. Kennedy, ``{3D} spatial fading correlation for
  uniform angle of arrival distribution,'' \emph{{IEEE} Commun. Lett.},
  vol.~19, pp. 1073--1076, June 2015.

\bibitem{Khalid:2014}
Z.~Khalid, R.~A. Kennedy, and J.~D. McEwen, ``An optimal-dimensionality
  sampling scheme on the sphere with fast spherical harmonic transforms,''
  \emph{{IEEE} Trans. Signal Process.}, vol.~62, no.~17, pp. 4597--4610, Sept
  2014.

\bibitem{Kennedy-book:2013}
R.~A. Kennedy and P.~Sadeghi, \emph{Hilbert Space Methods in Signal
  Processing}.\hskip 1em plus 0.5em minus 0.4em\relax Cambridge, UK: Cambridge
  University Press, Mar. 2013.

\bibitem{Candes:2014}
E.~J. Cand\`{e}s and C.~Fernandez-Granda, ``Towards a mathematical theory of
  super-resolution,'' \emph{{C}omm. {P}ure {A}ppl. {M}ath}, vol.~67, no.~6, pp.
  906--956, Jun. 2014.

\bibitem{Bendory:2015}
T.~Bendory, S.~Dekel, and A.~Feuer, ``Super-resolution on the sphere using
  convex optimization,'' \emph{{IEEE} Trans. Signal Process.}, vol.~63, no.~9,
  pp. 2253--2262, May 2015.

\bibitem{Vetterli:2002}
M.~Vetterli, P.~Marziliano, and T.~Blu, ``Sampling signals with finite rate of
  innovation,'' \emph{{IEEE} Trans. Signal Process.}, vol.~50, no.~6, pp.
  1417--1428, Jun 2002.

\bibitem{Shukla:2007}
P.~Shukla and P.~L. Dragotti, ``Sampling schemes for multidimensional signals
  with finite rate of innovation,'' \emph{{IEEE} Trans. Signal Process.},
  vol.~55, no.~7, pp. 3670--3686, Jul. 2007.

\bibitem{Sameul:2013}
S.~Deslauriers-Gauthier and P.~Marziliano, ``Sampling signals with a finite
  rate of innovation on the sphere,'' \emph{{IEEE} Trans. Signal Process.},
  vol.~61, no.~18, pp. 4552--4561, Sept 2013.

\end{thebibliography}

\end{document}